\documentclass[%
 reprint,
superscriptaddress,
groupedaddress,
nofootinbib,
amsmath,amssymb,
aps,
prl,
floatfix,
]{revtex4-2}
\usepackage{graphicx}
\usepackage{xcolor}
\usepackage{bbold}
\usepackage{amsthm}
\usepackage{dsfont}
\newtheorem{proposition}{Proposition}
\newtheorem{corollary}{Corollary}

\usepackage{physics}

\definecolor{mymagenta}{RGB}{200, 0, 100}
\definecolor{myblue}{RGB}{45, 48, 146}

\usepackage[pdftex,
            pdftitle={Imaginarity as a necessary resource for trainability in QAOA},
            pdfauthor={Kostas Blekos, Syed Muhammad Ali Hassan, Stefan Kuehn, Nikos Kollas, Karl Jansen},
            bookmarks,
            colorlinks,
            linkcolor=myblue,
            citecolor=mymagenta,
            menucolor=black,
            urlcolor=myblue,
            plainpages=false,
            pdfpagelabels,
            hypertexnames=false]{hyperref}
\usepackage{orcidlink}
\begin{document}

\title{Imaginarity as a necessary resource for trainability in QAOA}

\author{Syed Muhammad Ali Hassan\,\orcidlink{0000-0001-9132-2846}}
\email{s.m.hassan@cyi.ac.cy}
\affiliation{Computation-based Science and Technology Research Center, The Cyprus Institute, Nicosia, Cyprus}

\author{Kostas Blekos\,\orcidlink{0000-0002-6777-2107}}
\affiliation{Computation-based Science and Technology Research Center, The Cyprus Institute, Nicosia, Cyprus}

\author{Stefan K{\"u}hn\,\orcidlink{0000-0001-7693-350X}}
\affiliation{Deutsches Elektronen-Synchrotron DESY, Platanenallee 6, 15738 Zeuthen, Germany}

\author{Nikos Kollas\,\orcidlink{0000-0003-0763-1211}}
\affiliation{Department of Physics, University of Patras, Patras, Greece}

\author{Karl Jansen}
\affiliation{Computation-based Science and Technology Research Center, The Cyprus Institute, Nicosia, Cyprus}
\affiliation{Deutsches Elektronen-Synchrotron DESY, Platanenallee 6, 15738 Zeuthen, Germany}

\date{\today}

\begin{abstract}
The quantum approximate optimization algorithm (QAOA) tackles combinatorial problems by tuning a quantum circuit in a classical loop, often guided by gradients.
We show that the gradient used to tune the circuit's final parameter is bounded by imaginarity, which weights phase relationships between candidate solutions by how strongly the circuit connects them and how differently the problem scores them.
Imaginarity is necessary but not sufficient for a nonzero gradient.
We extend the bound to three common noise models and compare it numerically with the gradient in Max-Cut simulations.
\end{abstract}

\maketitle


\section{Introduction}

The quantum approximate optimization algorithm (QAOA) is a hybrid quantum--classical algorithm for combinatorial optimization: it searches a large set of candidate solutions for one that maximizes an objective function~\cite{farhi2014quantumapproximateoptimizationalgorithm}.
It is often studied as a candidate for near-term, noisy quantum hardware, since it uses shallow, problem-structured circuits~\cite{Preskill_2018,Abbas2024}.
The circuit alternates problem operations, which encode the objective function, with mixer operations, which redistribute probability among candidate solutions.
Each operation is applied with an adjustable angle, and a classical optimizer tunes these angles to maximize the expected objective value.
Training typically relies on the gradient of this expectation with respect to an angle, and when that gradient is small, optimization becomes difficult.
In asymptotic settings, this loss of trainable signal is related to barren-plateau phenomena in variational quantum algorithms~\cite{McClean_2018,Cerezo_2021,Arrasmith_2021,Larocca2025}.
An upper bound on a gradient can identify conditions under which a parameter cannot provide a useful update: when the bound vanishes, the corresponding derivative must vanish.
Such a bound can identify state resources required for a gradient signal, but it does not guarantee trainability or a nonzero gradient.

Quantum resources such as coherence and entanglement play a crucial role in many quantum algorithms~\cite{PhysRevA.93.012111,PhysRevA.106.062429,PhysRevLett.129.120501,e21030260,FENG2023129048}.
They also play a role in QAOA: coherence of the input state can guide the choice of initialization~\cite{Caliz2025}, while quantum Fisher information has been used to study entanglement and parameter sensitivity~\cite{Sarmina2026}.
Here we identify cost-basis imaginarity~\cite{Hickey_2018,Wu_2021,Wu_2021a} as a necessary resource for a nonzero gradient with respect to the final mixer angle.

For any diagonal cost Hamiltonian and any mixer that is real in the computational basis, this gradient receives contributions only from imaginary coherences between basis states connected by the mixer.
The mixer selects the relevant off-diagonal entries, while differences in cost weight their contributions.
This gives an exact identity, an upper bound, and a necessary condition for a nonzero gradient; the converse does not follow because different contributions may cancel.
For the standard transverse-field mixer, the selected coherences connect bitstrings that differ by one bit.
For terminal noise, the same identity applies when the cost observable propagated through the adjoint channel remains diagonal, with the bound evaluated on the pre-channel state and the effective cost observable.

We first introduce the QAOA and imaginarity notation, then derive the noiseless and terminal-noise bounds.
We use Max-Cut as a concrete example and compare depolarizing, phase-flip, and amplitude-damping channels in exact simulations of small instances.

\section{Background and setup}

\subsection{\label{sec:qaoa}QAOA}

QAOA is a hybrid quantum-classical algorithm for combinatorial optimization problems.
Its quantum part prepares a parametric ansatz state from alternating cost and mixer layers~\cite{farhi2014quantumapproximateoptimizationalgorithm}.
The relevant objects are the cost Hamiltonian $H_C$, the mixer Hamiltonian $H_B$, and the cost expectation $\Tr(\rho H_C)$.

For each bitstring $\mathbf{z}\in\{0,1\}^n$, let $C_{\mathbf{z}}$ be its real objective value.
The cost Hamiltonian is defined as
\begin{equation}
H_C = \sum_{\mathbf{z}\in\{0,1\}^n} C_{\mathbf{z}}\,\ketbra{\mathbf{z}},
\label{eq:cost_hamiltonian}
\end{equation}
where $\ket{\mathbf{z}}$ is the corresponding computational-basis state.

The standard mixer (used later in this work) is the \emph{transverse-field Hamiltonian}
\begin{equation}
H_B = \sum_{i=1}^{n} X_i.
\label{eq:mixer}
\end{equation}
Each $X_i$ flips one qubit, so $H_B$ connects computational-basis states that differ in one bit.

The algorithm starts from the uniform superposition of computational states
\begin{equation}
\ket{s}=\ket{+}^{\otimes n}
=\frac{1}{\sqrt{2^n}}\sum_{\mathbf{z}\in\{0,1\}^n}\ket{\mathbf{z}}.
\end{equation}
A depth-$p$ circuit applies $p$ alternating cost and mixer layers to this state, producing
\begin{equation}
\ket{\psi_p(\boldsymbol{\gamma},\boldsymbol{\beta})}
=
e^{-i\beta_p H_B}e^{-i\gamma_p H_C}\cdots
e^{-i\beta_1 H_B}e^{-i\gamma_1 H_C}\ket{s}.
\end{equation}
Each layer has one cost angle and one mixer angle.
The variational parameters are the cost angles $\boldsymbol{\gamma}=(\gamma_1,\ldots,\gamma_p)$ and mixer angles $\boldsymbol{\beta}=(\beta_1,\ldots,\beta_p)$ for all layers.
A classical optimizer adjusts these angles.
During training, the goal is to maximize the cost expectation
\begin{equation}
F_p(\boldsymbol{\gamma},\boldsymbol{\beta})
=
\bra{\psi_p(\boldsymbol{\gamma},\boldsymbol{\beta})}
H_C
\ket{\psi_p(\boldsymbol{\gamma},\boldsymbol{\beta})}.
\end{equation}

\subsection{\label{sec:cost-basis-imaginarity}Imaginarity}

Throughout this work, the reference basis is the computational basis: each basis state represents a candidate solution, and the cost Hamiltonian is diagonal in this basis.
The off-diagonal matrix elements describe coherences between different candidate solutions, and imaginarity refers to the imaginary parts of these coherences.
The resource theory of imaginarity defines real density matrices as free states and treats non-real matrix entries in a fixed basis as resourceful~\cite{Hickey_2018}.
Consider a density operator $\rho$ on a $d$-dimensional Hilbert space $\mathcal{H}$ with orthonormal reference basis
$\{\ket{z}\}_{z=0}^{d-1}$,

\begin{equation}
\rho
=
\sum_{z,z'=0}^{d-1}\rho_{zz'}\ketbra{z}{z'}.
\end{equation}

We say that $\rho$ is real, or free, if $\rho_{zz'}\in\mathds{R}$ for all $z,z'$.
States with nonzero imaginary off-diagonal elements are resourceful.
Imaginarity is useful for quantum information tasks including local state discrimination and state conversion~\cite{Wu_2021,Wu_2021a}.
It has also been studied in pseudorandomness, multiparameter estimation, quantum circuit complexity, and quantum speed limits~\cite{haug2025pseudorandom,miyazaki2022imaginarity,ye2026coherence,s7kr-8rrn}.
The sum of the absolute values of the imaginary parts of the off-diagonal elements defines a measure of imaginarity,
\begin{equation}
I_{\ell_1}(\rho)
=
\sum_{z\ne z'}\left|\operatorname{Im}\left(\rho_{zz'}\right)\right|,
\end{equation}
known as the $\ell_1$-norm of imaginarity~\cite{Chen_2023}.

\section{Mixer-gradient bounds}

In the previous section we defined imaginarity as a basis-dependent property of a quantum state.
We now connect it to the QAOA gradient with respect to the final mixer angle.
The calculation requires a diagonal cost Hamiltonian, a mixer that is real in the same basis, and an input state that is independent of that mixer angle.

\subsection{Noiseless QAOA}

Let $H_C$ be the diagonal cost Hamiltonian in Eq.~\eqref{eq:cost_hamiltonian},
and let $H_B$ be a Hermitian mixer whose computational-basis matrix elements are real.
For
\begin{equation}
\rho_\beta=e^{-i\beta H_B}\rho_0e^{i\beta H_B},
\qquad
F(\beta)=\Tr(H_C\rho_\beta),
\end{equation}
assume that $\rho_0$ is independent of $\beta$.
For QAOA, $\beta$ can be taken to be the final mixer angle $\beta_p$, with $\rho_0$ denoting the state immediately before that final mixer.

We define the ordered mixer support
\begin{equation}
\mathcal{S}_B
=
\{(\mathbf{z},\mathbf{z}'):\mathbf{z}\ne\mathbf{z}',\ (H_B)_{\mathbf{z}\mathbf{z}'}\ne0\}.
\end{equation}
Pairs in $\mathcal{S}_B$ are called mixer edges.
The computational-basis states can therefore be viewed as vertices of a graph, with a mixer edge joining two candidate solutions that the mixer directly couples.
This mixer graph is distinct from the problem graph used in the Max-Cut example below.
All sums over $\mathcal{S}_B$ below are ordered-pair sums.

\begin{proposition}[Mixer-gradient identity]
\label{prop:mixer-edge-gradient}
For a diagonal cost Hamiltonian and a real mixer,
\begin{equation}
\partial_\beta F
=
-\sum_{(\mathbf{z},\mathbf{z}')\in\mathcal{S}_B}
(C_{\mathbf{z}}-C_{\mathbf{z}'})
(H_B)_{\mathbf{z}\mathbf{z}'}
\operatorname{Im}(\rho_\beta)_{\mathbf{z}\mathbf{z}'}.
\label{eq:mixer-gradient-identity}
\end{equation}
Consequently,
\begin{equation}
|\partial_\beta F|
\le
\sum_{(\mathbf{z},\mathbf{z}')\in\mathcal{S}_B}
|C_{\mathbf{z}}-C_{\mathbf{z}'}|
|(H_B)_{\mathbf{z}\mathbf{z}'}|
|\operatorname{Im}(\rho_\beta)_{\mathbf{z}\mathbf{z}'}|.
\label{eq:mixer-gradient-weighted-bound}
\end{equation}
\end{proposition}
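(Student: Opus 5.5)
We differentiate $F(\beta)=\Tr(H_C\rho_\beta)$ directly and then expand the resulting commutator in the computational basis. Since $\rho_0$ does not depend on $\beta$, we have $\partial_\beta\rho_\beta=-i[H_B,\rho_\beta]$. Using cyclicity of the trace, this gives
\begin{equation}
\partial_\beta F=-i\Tr(H_C[H_B,\rho_\beta])=i\Tr([H_B,H_C]\rho_\beta).
\end{equation}
Because $H_C$ is diagonal with entries $C_{\mathbf{z}}$, the commutator has matrix elements $[H_B,H_C]_{\mathbf{z}'\mathbf{z}}=(H_B)_{\mathbf{z}'\mathbf{z}}(C_{\mathbf{z}}-C_{\mathbf{z}'})$. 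Diagonal terms vanish automatically, and so do pairs with $(H_B)_{\mathbf{z}'\mathbf{z}}=0$. Only ordered pairs in $\mathcal{S}_B$ survive, so
\begin{equation}
\partial_\beta F=i\sum_{(\mathbf{z},\mathbf{z}')\in\mathcal{S}_B}(C_{\mathbf{z}}-C_{\mathbf{z}'})(H_B)_{\mathbf{z}'\mathbf{z}}(\rho_\beta)_{\mathbf{z}\mathbf{z}'}.
\end{equation}
Here $\mathcal{S}_B$ is symmetric because $H_B$ is Hermitian and real, and hence symmetric.

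The next step extracts the imaginary part. Since $H_B$ is real symmetric, $(H_B)_{\mathbf{z}'\mathbf{z}}=(H_B)_{\mathbf{z}\mathbf{z}'}\in\mathbb{R}$. Swapping the labels $\mathbf{z}\leftrightarrow\mathbf{z}'$ maps $\mathcal{S}_B$ to itself and flips the sign of $C_{\mathbf{z}}-C_{\mathbf{z}'}$. Hermiticity of $\rho_\beta$ gives $(\rho_\beta)_{\mathbf{z}'\mathbf{z}}=\overline{(\rho_\beta)_{\mathbf{z}\mathbf{z}'}}$. Averaging the sum with its relabelled copy replaces $(\rho_\beta)_{\mathbf{z}\mathbf{z}'}$ by
\begin{equation}
\tfrac12\bigl[(\rho_\beta)_{\mathbf{z}\mathbf{z}'}-\overline{(\rho_\beta)_{\mathbf{z}\mathbf{z}'}}\bigr]=i\operatorname{Im}(\rho_\beta)_{\mathbf{z}\mathbf{z}'}.
\end{equation}
Combined with the prefactor $i$, this yields $i\cdot i=-1$ and hence exactly Eq.~\eqref{eq:mixer-gradient-identity}. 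As a consistency check, $\partial_\beta F$ must be real since $F$ is real, which this form makes manifest.

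The main obstacle is bookkeeping rather than any conceptual difficulty. The ordering convention of the pairs in $\mathcal{S}_B$ must match the index order in $(\rho_\beta)_{\mathbf{z}\mathbf{z}'}$, and this fixes the overall sign. I will verify the sign on a single qubit with $H_B=X$, $H_C=\mathrm{diag}(C_0,C_1)$. There, $[X,H_C]$ can be computed explicitly, and comparing it against the $2\times2$ case of the formula confirms the sign. Once Eq.~\eqref{eq:mixer-gradient-identity} is established, Eq.~\eqref{eq:mixer-gradient-weighted-bound} follows from the triangle inequality applied termwise, using $|abc|=|a||b||c|$ for the real factors.
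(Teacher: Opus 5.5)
Your proposal is correct and follows the paper's proof essentially step for step. You differentiate to get $-i\Tr(H_C[H_B,\rho_\beta])$, move the commutator onto $H_C$ by cyclicity, read off its matrix elements $\pm(C_{\mathbf{z}}-C_{\mathbf{z}'})(H_B)_{\mathbf{z}\mathbf{z}'}$, and pair each ordered term with its reverse (using the symmetry of $H_B$ and the Hermiticity of $\rho_\beta$) to isolate $\operatorname{Im}(\rho_\beta)_{\mathbf{z}\mathbf{z}'}$; the only cosmetic difference is that you write $i\Tr([H_B,H_C]\rho_\beta)$ where the paper writes $-i\Tr([H_C,H_B]\rho_\beta)$, and your sign bookkeeping checks out.
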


In simpler terms, a pair of candidate solutions can contribute to the derivative only if the mixer connects them, their objective values differ, and their coherence has a nonzero imaginary part.
If all such imaginary coherences vanish, then the derivative vanishes; the converse need not hold because different contributions can cancel.

We also define the following quantities, which will be used later:
\begin{align}
\Delta_{\mathcal{S}_B}(H_C)
&=
\max_{(\mathbf{z},\mathbf{z}')\in\mathcal{S}_B}
|C_{\mathbf{z}}-C_{\mathbf{z}'}|,
\\
M_{\mathcal{S}_B}(H_B)
&=
\max_{(\mathbf{z},\mathbf{z}')\in\mathcal{S}_B}
|(H_B)_{\mathbf{z}\mathbf{z}'}|,
\\
I_{\ell_1}^{(\mathcal{S}_B)}(\rho)
&=
\sum_{(\mathbf{z},\mathbf{z}')\in\mathcal{S}_B}
|\operatorname{Im}\rho_{\mathbf{z}\mathbf{z}'}|.
\end{align}
The last quantity is the mixer-edge imaginarity, namely the imaginarity supported on pairs directly connected by the mixer.

\begin{corollary}[Mixer-support and coarse bounds]
\label{cor:mixer-support-bound}
Then
\begin{equation}
|\partial_\beta F|
\le
\Delta_{\mathcal{S}_B}(H_C)M_{\mathcal{S}_B}(H_B)
I_{\ell_1}^{(\mathcal{S}_B)}(\rho_\beta).
\end{equation}
\end{corollary}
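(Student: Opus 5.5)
The corollary follows directly from the weighted bound \eqref{eq:mixer-gradient-weighted-bound} of Proposition~\ref{prop:mixer-edge-gradient}, which we take as given. The plan is to bound each summand termwise by pulling out the two maximal factors and leaving only the imaginary coherence inside the sum.

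First, for every ordered pair $(\mathbf{z},\mathbf{z}')\in\mathcal{S}_B$ we have, by definition of the maxima,
\begin{equation}
|C_{\mathbf{z}}-C_{\mathbf{z}'}|\le\Delta_{\mathcal{S}_B}(H_C),
\qquad
|(H_B)_{\mathbf{z}\mathbf{z}'}|\le M_{\mathcal{S}_B}(H_B).
\end{equation}
These maxima are finite because $\mathcal{S}_B$ is a finite set of pairs of $n$-bit strings. If $\mathcal{S}_B$ is empty, then $H_B$ is diagonal, the sum in \eqref{eq:mixer-gradient-weighted-bound} is empty, and $\partial_\beta F=0$, so the claim holds trivially under any convention for the empty maxima. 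We will note this edge case explicitly.

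Second, since all three factors in each summand are nonnegative, we multiply the two inequalities by $|\operatorname{Im}(\rho_\beta)_{\mathbf{z}\mathbf{z}'}|\ge0$. This gives
\begin{equation}
|C_{\mathbf{z}}-C_{\mathbf{z}'}|\,|(H_B)_{\mathbf{z}\mathbf{z}'}|\,|\operatorname{Im}(\rho_\beta)_{\mathbf{z}\mathbf{z}'}|
\le
\Delta_{\mathcal{S}_B}(H_C)\,M_{\mathcal{S}_B}(H_B)\,|\operatorname{Im}(\rho_\beta)_{\mathbf{z}\mathbf{z}'}|.
\end{equation}
Summing over the ordered pairs in $\mathcal{S}_B$ and factoring out the constants, the right-hand side becomes $\Delta_{\mathcal{S}_B}(H_C)M_{\mathcal{S}_B}(H_B)I_{\ell_1}^{(\mathcal{S}_B)}(\rho_\beta)$. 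Chaining this with \eqref{eq:mixer-gradient-weighted-bound} yields the stated bound.

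There is no real obstacle; the only point requiring care is consistency of the sums. The ordered-pair convention must be used both in Proposition~\ref{prop:mixer-edge-gradient} and in the definition of $I_{\ell_1}^{(\mathcal{S}_B)}$, so that no factor of $2$ is lost or gained. Both are ordered-pair sums over the same set $\mathcal{S}_B$, so they match.

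We may also remark on a special case. For the transverse-field mixer \eqref{eq:mixer}, every nonzero off-diagonal entry equals $1$, so $M_{\mathcal{S}_B}(H_B)=1$. Moreover, $I_{\ell_1}^{(\mathcal{S}_B)}\le I_{\ell_1}$ because $\mathcal{S}_B$ is a subset of all off-diagonal pairs. Together these give a coarser bound in terms of the full $\ell_1$-imaginarity.
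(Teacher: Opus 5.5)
Your proposal is correct and is the paper's own argument: the appendix likewise replaces each cost difference and each mixer matrix element in the weighted bound~\eqref{eq:mixer-gradient-weighted-bound} by its maximum over $\mathcal{S}_B$ and factors these constants out of the sum. Your remarks on the empty-support case and on using ordered-pair sums consistently are details the paper leaves implicit.
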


The proof of Proposition~\ref{prop:mixer-edge-gradient} and Corollary~\ref{cor:mixer-support-bound} is given in the Appendix.

\subsection{Terminal noise and the adjoint observable}

Let a channel $\mathcal{E}_\lambda$, where $\lambda$ denotes the noise strength, act after the variational circuit, so that
\begin{equation}
\sigma_\lambda(\beta)=\mathcal{E}_\lambda(\rho_\beta),
\qquad
F_\lambda(\beta)=\Tr(H_C\sigma_\lambda(\beta)).
\end{equation}
The adjoint channel is defined by
\begin{equation}
\Tr[A\mathcal{E}_\lambda(B)]
=
\Tr[\mathcal{E}_\lambda^\dagger(A)B].
\end{equation}
We can express the noisy expectation using an effective objective observable evaluated on the state before noise:
\begin{equation}
F_\lambda(\beta)
=
\Tr(H_C^{(\lambda)}\rho_\beta),
\qquad
H_C^{(\lambda)}=\mathcal{E}_\lambda^\dagger(H_C).
\end{equation}
The two expressions are equivalent: the first applies noise to the state, while the second incorporates the same noise into $H_C^{(\lambda)}$ through the adjoint channel.

\begin{proposition}[Terminal-noise adjoint-channel bound]
\label{prop:terminal-adj-bound}
When the adjoint-propagated observable remains diagonal in the computational basis,
\begin{equation}
H_C^{(\lambda)}
=
\sum_{\mathbf{z}}C_{\mathbf{z}}^{(\lambda)}\ketbra{\mathbf{z}}.
\end{equation}
Then
\begin{align}
|\partial_\beta F_\lambda|
&\le
B_{\rm adj}^{(\lambda)}(\rho_\beta),
\label{eq:terminal-adj-bound}
\\
B_{\rm adj}^{(\lambda)}(\rho)
&=
\sum_{(\mathbf{z},\mathbf{z}')\in\mathcal{S}_B}
|C_{\mathbf{z}}^{(\lambda)}-C_{\mathbf{z}'}^{(\lambda)}|
|(H_B)_{\mathbf{z}\mathbf{z}'}|
|\operatorname{Im}\rho_{\mathbf{z}\mathbf{z}'}|
\\
&\le
\Delta_{\mathcal{S}_B}(H_C^{(\lambda)})M_{\mathcal{S}_B}(H_B)
I_{\ell_1}^{(\mathcal{S}_B)}(\rho).
\end{align}
\end{proposition}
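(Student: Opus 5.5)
The plan is to reduce Proposition~\ref{prop:terminal-adj-bound} to Proposition~\ref{prop:mixer-edge-gradient} and Corollary~\ref{cor:mixer-support-bound}, applied with $H_C$ replaced by the effective observable $H_C^{(\lambda)}$. The first step is to write the noisy objective in its adjoint form, $F_\lambda(\beta)=\Tr(H_C^{(\lambda)}\rho_\beta)$. This is legitimate because $\mathcal{E}_\lambda$ acts after the variational circuit and does not depend on $\beta$. Hence $H_C^{(\lambda)}=\mathcal{E}_\lambda^\dagger(H_C)$ is a fixed operator, and all of the $\beta$-dependence sits in $\rho_\beta=e^{-i\beta H_B}\rho_0e^{i\beta H_B}$ with $\rho_0$ independent of $\beta$.

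Next I will check that the hypotheses of Proposition~\ref{prop:mixer-edge-gradient} hold for $H_C^{(\lambda)}$. It is Hermitian, since the adjoint of a CPTP map is Hermiticity-preserving, so its diagonal entries $C^{(\lambda)}_{\mathbf z}$ are real. By assumption it is diagonal in the computational basis. The mixer $H_B$ is unchanged and real, so the support $\mathcal{S}_B$ is the same set. The proof of Proposition~\ref{prop:mixer-edge-gradient} uses only these properties: a diagonal real observable, a real mixer, and a $\beta$-independent initial state. It never uses the specific form of $C_{\mathbf z}$ as an objective value, so it applies verbatim.

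For completeness I will recall the core computation, since this is the only place where care is needed:
\begin{equation}
\partial_\beta F_\lambda=-i\Tr\bigl(H_C^{(\lambda)}[H_B,\rho_\beta]\bigr)=i\Tr\bigl([H_C^{(\lambda)},H_B]\rho_\beta\bigr).
\end{equation}
Because $H_C^{(\lambda)}$ is diagonal, the commutator has entries $(C^{(\lambda)}_{\mathbf z}-C^{(\lambda)}_{\mathbf z'})(H_B)_{\mathbf z\mathbf z'}$, and these are supported on $\mathcal{S}_B$. Pairing $(\mathbf z,\mathbf z')$ with $(\mathbf z',\mathbf z)$, and using the reality of $H_B$ together with the Hermiticity of $\rho_\beta$, the real parts of $\rho_\beta$ cancel and only $\operatorname{Im}(\rho_\beta)_{\mathbf z\mathbf z'}$ survives. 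This gives the identity of Eq.~\eqref{eq:mixer-gradient-identity} with $C$ replaced by $C^{(\lambda)}$. The triangle inequality then yields $|\partial_\beta F_\lambda|\le B^{(\lambda)}_{\rm adj}(\rho_\beta)$.

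The coarse inequality follows as in Corollary~\ref{cor:mixer-support-bound}. In each term I bound $|C^{(\lambda)}_{\mathbf z}-C^{(\lambda)}_{\mathbf z'}|$ by $\Delta_{\mathcal{S}_B}(H_C^{(\lambda)})$ and $|(H_B)_{\mathbf z\mathbf z'}|$ by $M_{\mathcal{S}_B}(H_B)$, and the remaining sum is $I_{\ell_1}^{(\mathcal{S}_B)}(\rho)$.

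I do not expect a genuine obstacle. The only point needing care is that diagonality of $H_C^{(\lambda)}$ is essential: without it, $[H_C^{(\lambda)},H_B]$ would acquire extra entries, and the real parts of $\rho_\beta$ would no longer cancel. This is exactly why the proposition carries that hypothesis. I will also note that the bound is evaluated on the pre-channel state $\rho_\beta$, not on $\sigma_\lambda$.
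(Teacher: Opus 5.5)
Your proposal is correct and follows the paper's proof step for step: write $F_\lambda(\beta)=\Tr(H_C^{(\lambda)}\rho_\beta)$ via the adjoint, use the $\beta$-independence of the channel so that the derivative acts only on $\rho_\beta$, rerun Proposition~\ref{prop:mixer-edge-gradient} with $C_{\mathbf z}\to C^{(\lambda)}_{\mathbf z}$, and finish with the triangle inequality and the maxima. One small slip: cyclicity gives $\Tr\!\left(A[B,\rho]\right)=\Tr\!\left([A,B]\rho\right)$, so $\partial_\beta F_\lambda=-i\Tr\!\left([H_C^{(\lambda)},H_B]\rho_\beta\right)$ rather than $+i\Tr\!\left([H_C^{(\lambda)},H_B]\rho_\beta\right)$; with your sign you would obtain the negative of Eq.~\eqref{eq:mixer-gradient-identity}, which does not affect the absolute-value bound but contradicts your claim that you recover that identity.
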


We call $B_{\rm adj}^{(\lambda)}(\rho)$ the weighted adjoint-channel bound.
It combines the mixer-edge imaginarity of the state before terminal noise with the noise-modified cost differences of $H_C^{(\lambda)}$.
For $\lambda=0$, the channels used below reduce to the identity, and $B_{\rm adj}^{(0)}(\rho_\beta)$ reduces to the noiseless weighted bound in Eq.~\eqref{eq:mixer-gradient-weighted-bound}.

This statement needs only the diagonality of $\mathcal{E}_\lambda^\dagger(H_C)$, not self-adjointness of the channel or a resource-theoretic classification of its Kraus operators.
It therefore applies to all terminal channels used below.

\subsection{Transverse-field mixer}

Let $d_H(\mathbf{z},\mathbf{z}')$ denote the Hamming distance between two bitstrings, namely the number of bit positions at which they differ.
For the transverse-field mixer in Eq.~\eqref{eq:mixer},
$(H_B)_{\mathbf{z}\mathbf{z}'}=1$ exactly when $d_H(\mathbf{z},\mathbf{z}')=1$.
Thus $\mathcal{S}_B$ is the ordered edge set of the Hamming hypercube: its vertices are all bitstrings, and its edges connect pairs related by a one-bit flip.
We define
\begin{align}
\Delta_1(H_C)
&=
\max_{d_H(\mathbf{z},\mathbf{z}')=1}
|C_{\mathbf{z}}-C_{\mathbf{z}'}|,
\\
I_{\ell_1}^{(1)}(\rho)
&=
\sum_{d_H(\mathbf{z},\mathbf{z}')=1}
|\operatorname{Im}\rho_{\mathbf{z}\mathbf{z}'}|.
\label{eq:restricted-imaginarity}
\end{align}
The noiseless transverse-field bound is therefore
\begin{equation}
|\partial_\beta F|
\le
\Delta_1(H_C)I_{\ell_1}^{(1)}(\rho_\beta).
\label{eq:transverse-field-bound}
\end{equation}
If $H_C^{(\lambda)}$ remains diagonal, the terminal-noise specialization is
\begin{align}
|\partial_\beta F_\lambda|
&\le
B_{\rm adj}^{(\lambda)}(\rho_\beta)
\\
&=
\sum_{d_H(\mathbf{z},\mathbf{z}')=1}
|C_{\mathbf{z}}^{(\lambda)}-C_{\mathbf{z}'}^{(\lambda)}|
|\operatorname{Im}(\rho_\beta)_{\mathbf{z}\mathbf{z}'}|
\\
&\le
\Delta_1(H_C^{(\lambda)})I_{\ell_1}^{(1)}(\rho_\beta).
\end{align}

\section{A concrete example: the Max-Cut problem}

Max-Cut asks for a partition of the vertices of a graph into two sets that maximizes the number of edges joining different sets.
It is a canonical NP-hard graph-partitioning problem with applications in circuit-layout design, statistical physics, network science, and clustering.
Its direct mapping to a diagonal Ising Hamiltonian also makes it a standard QAOA benchmark~\cite{farhi2014quantumapproximateoptimizationalgorithm,chang2023quantumspeedupmaximumcut,Zhou_2020,Blekos_2024}.
For a graph $G=(V,E_G)$ with vertex set $V$ and edge set $E_G$, a bitstring $\mathbf{z}\in\{0,1\}^{|V|}$ assigns every vertex to one of two subsets.
An edge $(i,j)$ crosses the partition when $z_i\ne z_j$, and the unweighted cut value is
\begin{equation}
C_{\mathbf{z}}
=
\sum_{(i,j)\in E_G}\mathbf{1}_{\{z_i\ne z_j\}}.
\end{equation}
Here $\mathbf{1}_{\{z_i\ne z_j\}}$ equals one when the endpoints belong to different sets and zero otherwise, so $C_{\mathbf{z}}$ counts the graph edges crossing the partition.
Max-Cut maximizes $C_{\mathbf{z}}$, with corresponding cost Hamiltonian
\begin{equation}
    H_C
    =
    \frac{1}{2}
    \sum_{(i,j)\in E_G}
    \left(\mathbb{1}-Z_iZ_j\right).
\end{equation}
The transverse-field mixer connects each bitstring to bitstrings obtained by flipping one vertex assignment.
Only graph edges incident to that vertex can change their crossing status.
Let $\Delta_{\max}$ be the maximum graph degree, namely the largest number of graph edges incident on any vertex.
Then $\Delta_1(H_C)\le\Delta_{\max}$.

\begin{corollary}[Transverse-field Max-Cut bound]
\label{cor:max-cut-bound}
For unweighted Max-Cut on a graph with maximum degree $\Delta_{\max}$ and the transverse-field mixer,
\begin{equation}
\abs{\partial_\beta F}
\le
\Delta_{\max}I_{\ell_1}^{(1)}(\rho_\beta).
\end{equation}
\end{corollary}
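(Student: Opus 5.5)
The plan is to specialize Corollary~\ref{cor:mixer-support-bound}, in the transverse-field form of Eq.~\eqref{eq:transverse-field-bound}, to Max-Cut. The only new ingredient is the bound $\Delta_1(H_C)\le\Delta_{\max}$.

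\emph{Hypotheses.} The Max-Cut Hamiltonian $H_C=\frac12\sum_{(i,j)\in E_G}(\mathbb{1}-Z_iZ_j)$ is diagonal in the computational basis, with eigenvalues $C_{\mathbf{z}}$ equal to the cut values; this is the form of Eq.~\eqref{eq:cost_hamiltonian}. The mixer $H_B=\sum_i X_i$ has real matrix elements, equal to $1$ exactly when $d_H(\mathbf{z},\mathbf{z}')=1$ and $0$ otherwise. Hence Proposition~\ref{prop:mixer-edge-gradient} and Corollary~\ref{cor:mixer-support-bound} apply, with $\mathcal{S}_B$ the ordered hypercube edge set and $M_{\mathcal{S}_B}(H_B)=1$. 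This yields Eq.~\eqref{eq:transverse-field-bound}:
\[
|\partial_\beta F|\le \Delta_1(H_C)\,I_{\ell_1}^{(1)}(\rho_\beta).
\]

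\emph{Bounding $\Delta_1(H_C)$.} This is the main, though elementary, step. Take $\mathbf{z}'$ to be $\mathbf{z}$ with bit $v$ flipped. Every graph edge $(i,j)$ with $v\notin\{i,j\}$ has the same endpoint bits in both strings, so its indicator $\mathbf{1}_{\{z_i\ne z_j\}}$ is unchanged and cancels in $C_{\mathbf{z}}-C_{\mathbf{z}'}$. For each edge incident to $v$, the indicator changes by exactly $\pm1$. Therefore
\[
|C_{\mathbf{z}}-C_{\mathbf{z}'}|\le\deg(v)\le\Delta_{\max}.
\]
In fact the difference equals $\deg(v)-2k$, where $k$ is the number of neighbours disagreeing with $v$ in $\mathbf{z}$, but only the inequality is needed. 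Maximizing over all Hamming-distance-one pairs gives $\Delta_1(H_C)\le\Delta_{\max}$.

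\emph{Conclusion.} Substituting this into Eq.~\eqref{eq:transverse-field-bound} gives $|\partial_\beta F|\le\Delta_{\max}I_{\ell_1}^{(1)}(\rho_\beta)$. No real obstacle remains: the only care needed is that $\mathcal{S}_B$ consists of ordered pairs. This does not matter, because both $\Delta_1$ and $I_{\ell_1}^{(1)}$ are defined with the same ordered-pair sum used in the Corollary.
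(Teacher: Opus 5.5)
Your proposal is correct and follows the paper's proof essentially step for step: specialize the general bound to the Hamming-hypercube support with unit mixer weights to obtain $\Delta_1(H_C)$, then note that a single bit flip changes the cut status only of edges incident to the flipped vertex, each by at most one, so $\Delta_1(H_C)\le\Delta_{\max}$. One harmless slip in your aside: with $k$ the number of neighbours disagreeing with $v$ in $\mathbf{z}$, it is $C_{\mathbf{z}'}-C_{\mathbf{z}}$, not $C_{\mathbf{z}}-C_{\mathbf{z}'}$, that equals $\deg(v)-2k$. Only the absolute value enters the bound, so this does not affect the argument.
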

The proof is given in the Appendix.
This is the graph-local version of the transverse-field bound.
For the numerical setting below, it explains why Hamming-one imaginarity is the quantity selected directly by the mixer.

For the noisy bound, the terminal channel is applied independently to each qubit.
At the one-qubit level, phase-flip noise applies $Z$ with probability $\lambda$, depolarizing noise applies each Pauli error $X$, $Y$, and $Z$ with probability $\lambda/3$, and amplitude damping models decay from $\ket{1}$ to $\ket{0}$ with probability $\lambda$.
The corresponding channels are
\begin{align}
\mathcal{P}_\lambda(\tau)
&=
(1-\lambda)\tau+\lambda Z\tau Z,
\\
\mathcal{D}_\lambda(\tau)
&=
(1-\lambda)\tau
+
\frac{\lambda}{3}\left(X\tau X+Y\tau Y+Z\tau Z\right),
\\
\mathcal{A}_\lambda(\tau)
&=
K_0\tau K_0^\dagger+K_1\tau K_1^\dagger,
\end{align}
where
\begin{equation}
K_0=\ketbra{0}+\sqrt{1-\lambda}\ketbra{1},
\qquad
K_1=\sqrt{\lambda}\ketbra{0}{1}.
\end{equation}
The corresponding one-qubit adjoint actions on $Z$ are
\begin{align}
    \mathcal{P}_\lambda^\dagger(Z)
    &=
    Z,
    \\
    \mathcal{D}_\lambda^\dagger(Z)
    &=
    \left(1-\frac{4\lambda}{3}\right)Z,
    \\
    \mathcal{A}_\lambda^\dagger(Z)
    &=
    (1-\lambda)Z+\lambda\mathbb{1}.
\end{align}
For tensor-product noise, these identities apply independently to each Pauli factor in $H_C$.
Thus phase-flip noise leaves the Max-Cut observable unchanged in the adjoint picture.
Depolarizing noise rescales each $Z_iZ_j$ term by $\left(1-4\lambda/3\right)^2$, while
amplitude damping maps $Z_iZ_j$ to $\big((1-\lambda)Z_i+\lambda\mathbb{1}\big)\big((1-\lambda)Z_j+\lambda\mathbb{1}\big)$.
All three adjoint observables remain diagonal in the computational basis, so the terminal-noise Hamming-one bound applies with the restricted spread $\Delta_1(H_C^{(\lambda)})$.
The amplitude-damping adjoint observable contains two-body $Z_iZ_j$ terms, one-body $Z_i$ terms, and constants.

\begin{corollary}[Noisy transverse-field Max-Cut bound]
\label{cor:noisy-max-cut-bound}
For independent phase-flip, depolarizing, or amplitude-damping noise with $0\le\lambda\le1$,
\begin{equation}
\Delta_1(H_C^{(\lambda)})\le\Delta_{\max},
\qquad
|\partial_\beta F_\lambda|
\le
\Delta_{\max}I_{\ell_1}^{(1)}(\rho_\beta).
\end{equation}
\end{corollary}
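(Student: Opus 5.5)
The plan is to reduce both claims to a single estimate of the Hamming-one spread of the effective observable $H_C^{(\lambda)}$. Once $\Delta_1(H_C^{(\lambda)})\le\Delta_{\max}$ holds, the gradient inequality follows at once from the terminal-noise transverse-field specialization of Proposition~\ref{prop:terminal-adj-bound}, namely $|\partial_\beta F_\lambda|\le\Delta_1(H_C^{(\lambda)})I_{\ell_1}^{(1)}(\rho_\beta)$. That specialization applies because all three adjoint observables are diagonal in the computational basis, as noted above.

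First, I will write each adjoint observable edge by edge. Tensor-product noise acts factorwise on Pauli strings and fixes $\mathbb{1}$ by unitality of adjoint channels, so
\begin{equation}
H_C^{(\lambda)}=\sum_{(i,j)\in E_G}h_{ij}^{(\lambda)},\qquad h_{ij}^{(\lambda)}=\tfrac12\bigl(\mathbb{1}-\mathcal{E}^\dagger(Z_i)\,\mathcal{E}^\dagger(Z_j)\bigr),
\end{equation}
with $\mathcal{E}^\dagger(Z)$ taken from the one-qubit list above. Each $h_{ij}^{(\lambda)}$ is diagonal, and its value on $\mathbf{z}$ depends only on $(z_i,z_j)$.

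Second, I will use locality. Flipping bit $k$ changes $h_{ij}^{(\lambda)}$ only when $k\in\{i,j\}$, so
\begin{equation}
|C^{(\lambda)}_{\mathbf{z}}-C^{(\lambda)}_{\mathbf{z}'}|\le\sum_{j:(k,j)\in E_G}\bigl|\Delta_k h_{kj}^{(\lambda)}\bigr|,
\end{equation}
which has at most $\deg(k)\le\Delta_{\max}$ terms. It therefore suffices to show that flipping one endpoint changes a single edge term by at most $1$.

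Third, I will verify that per-edge bound channel by channel. Write $z_k\mapsto s_k=(-1)^{z_k}$. For phase-flip noise the edge term is exactly the Max-Cut indicator, so its change is $0$ or $1$. For depolarizing noise the term is $\tfrac12(1-c\,s_is_j)$ with $c=(1-4\lambda/3)^2\in[0,1]$ for $\lambda\in[0,1]$. Flipping $s_k$ changes it by $|c\,s_is_j|=c\le1$; this is the point where the range of $\lambda$ must be checked.

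For amplitude damping, $\mathcal{A}^\dagger(Z)$ has eigenvalue $a_0=1$ on $\ket0$ and $a_1=1-2\lambda\in[-1,1]$ on $\ket1$. The edge term is $\tfrac12(1-a_{z_i}a_{z_j})$. Flipping $z_i$ changes it by $\tfrac12|a_{z_j}|\,|a_0-a_1|\le\tfrac12\cdot1\cdot2\lambda=\lambda\le1$. These three cases give $\Delta_1(H_C^{(\lambda)})\le\Delta_{\max}$, and hence the stated gradient bound.

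I expect the amplitude-damping case to be the main obstacle. Its one-body and constant terms could in principle add to the spread, and the risk is double counting when the per-edge differences are summed. I will handle this by keeping the product form $\tfrac12(1-a_{z_i}a_{z_j})$ intact rather than expanding it into $Z_iZ_j$, $Z_i$ and constant pieces, so that each graph edge incident to $k$ contributes exactly one bounded difference.
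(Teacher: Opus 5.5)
Your route is the paper's: reduce both claims to $\Delta_1(H_C^{(\lambda)})\le\Delta_{\max}$, use that a one-bit flip only affects edges incident on the flipped vertex, bound each such edge change by one for each channel, and finish with the terminal-noise Hamming-one bound. For amplitude damping you keep the product form $\tfrac12(1-q_iq_j)$, as the paper does. For depolarizing noise the paper instead writes $H_C^{(\lambda)}=a^2H_C+\tfrac{|E_G|}{2}(1-a^2)\mathbb{1}$ and gets the exact identity $\Delta_1(H_C^{(\lambda)})=a^2\Delta_1(H_C)$. Your per-edge bound $c=a^2\le1$ is equivalent.

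There is a sign error in your amplitude-damping step. Since $\mathcal{A}_\lambda^\dagger(Z)=(1-\lambda)Z+\lambda\mathbb{1}$, the eigenvalue on $\ket{1}$ is $-(1-\lambda)+\lambda=2\lambda-1$, not $1-2\lambda$. You can check this at the endpoints: at $\lambda=0$ the adjoint must reduce to $Z$, with eigenvalue $-1$ on $\ket{1}$. At $\lambda=1$ it must be $\mathbb{1}$, because every state decays to $\ket{0}$.

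The correct values give $|a_0-a_1|=2(1-\lambda)$. Flipping an endpoint then changes the edge term by $(1-\lambda)|a_{z_j}|\le1-\lambda$, which is exactly the paper's $(1-\lambda)|q_j|$ with $q_j=\lambda+(1-\lambda)s_j$. Your intermediate bound of $\lambda$ is false: at $\lambda=0$ it would say that flipping an endpoint never changes the cut indicator, yet it always toggles it.

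The final per-edge bound of one still holds after the correction, because $|a_{z_j}|\le1$ and $1-\lambda\le1$. So the corollary follows, and you also get the sharper $\Delta_1(H_C^{(\lambda)})\le(1-\lambda)\Delta_{\max}$ that the paper records.
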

The proof is given in the Appendix.
The guaranteed weighted terminal-noise bound remains $B_{\rm adj}^{(\lambda)}(\rho_\beta)$, which retains the individual channel-modified cost differences rather than replacing them by their maximum.

For tensor-product phase-flip noise, this difference can be written explicitly.
For computational-basis strings $\mathbf{z}$ and $\mathbf{z}'$,
\begin{equation}
\mathcal{P}_\lambda^{\otimes n}
\!\left(\ket{\mathbf{z}}\bra{\mathbf{z}'}\right)
=
(1-2\lambda)^{d_H(\mathbf{z},\mathbf{z}')}
\ket{\mathbf{z}}\bra{\mathbf{z}'}.
\end{equation}
It follows that
$I_{\ell_1}^{(1)}(\sigma_\lambda)=|1-2\lambda|I_{\ell_1}^{(1)}(\rho_\beta)$.
At the same time, $\mathcal{P}_\lambda^\dagger(H_C)=H_C$, so
$F_\lambda(\beta)=F(\beta)$ and $\partial_\beta F_\lambda=\partial_\beta F$.
Terminal phase-flip noise can therefore reduce the final-state imaginarity while leaving the gradient and its adjoint-channel bound unchanged.
At $\lambda=1/2$, the final state is fully dephased and its imaginarity vanishes, although the gradient can remain nonzero.

\subsection{\label{sec:numerical-analysis}Numerical analysis}

The figures below use exact dense-state simulation of depth-$p=1$ QAOA for Max-Cut on random regular graphs.
Noise is applied once, as a terminal channel after the variational circuit.
The first figure compares the weighted adjoint-channel bound $B_{\rm adj}^{(\lambda)}(\rho_\beta)$ pointwise with $|\partial_\beta F_\lambda|$.
Whenever $B_{\rm adj}^{(\lambda)}(\rho_\beta)>0$, the saturation ratio
$R_{\rm adj}=|\partial_\beta F_\lambda|/B_{\rm adj}^{(\lambda)}(\rho_\beta)$ is at most one; the second figure shows how its distribution changes with noise strength.
For comparison with ordinary coherence in the fourth figure, we use
\begin{equation}
C_{\ell_1}^{(1)}(\rho)
=
\sum_{d_H(\mathbf{z},\mathbf{z}')=1}
|\rho_{\mathbf{z}\mathbf{z}'}|.
\end{equation}
This quantity includes real as well as imaginary off-diagonal components; the mixer-gradient identity depends only on the imaginary components.
The main ensemble uses $n=6$, graph degrees $k\in\{3,4,5\}$, and graph counts $2$, $2$ and $1$, respectively.
A $k$-regular graph is one in which every vertex has degree $k$.
For each graph, we sample $18$ parameter pairs independently with $\beta\in[0,\pi/2]$ and $\gamma\in[0,\pi]$.
The terminal noise strengths are $\lambda\in\{0,0.06,0.12,0.18,0.24,0.30\}$.
The finite-size panel uses one $3$-regular graph for each $n\in\{4,6,8\}$, with the same number of parameter samples and fixed $\lambda=0.16$.
Duplicate graph edge sets are rejected, and graphs are retained regardless of connectedness.
Gradients are evaluated analytically from the adjoint-channel expression.
All Max-Cut costs and gradients are unnormalized, and all Hamming-one coherence sums use the ordered-pair convention.
Medians and percentile bands are taken over the combined graph-parameter ensemble for the corresponding panel.

\begin{figure*}[t]
    \centering
    \includegraphics[width=\textwidth]{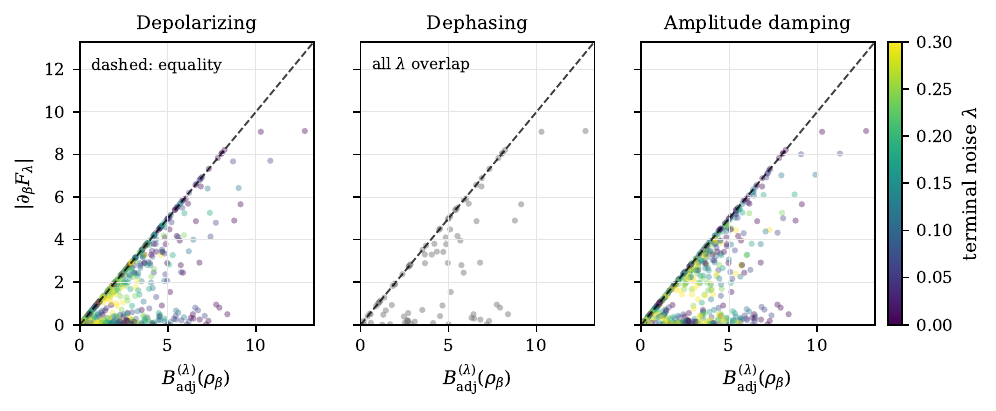}
    \caption{\label{fig:adjoint-bound-by-channel}
    Per-instance test of the terminal-noise adjoint-channel bound.
    Each point is one depth-$p=1$ QAOA instance on an $n=6$ random $k$-regular Max-Cut graph with $k\in\{3,4,5\}$ and one parameter pair $(\beta,\gamma)$.
    The horizontal axis is the weighted bound $B_{\rm adj}^{(\lambda)}(\rho_\beta)$ formed from the pre-channel state and the adjoint-propagated cost observable; the vertical axis is $|\partial_\beta F_\lambda|$.
    Color denotes the terminal noise strength $\lambda$.
    The dashed diagonal marks saturation, and all points lie on or below it.
    Distance below the diagonal results from cancellations between signed mixer-edge contributions.
    For phase-flip noise, both axes are independent of $\lambda$, so the coincident points for the six noise strengths are plotted once in gray.}
\end{figure*}

\begin{figure}[t]
    \centering
    \includegraphics[width=\columnwidth]{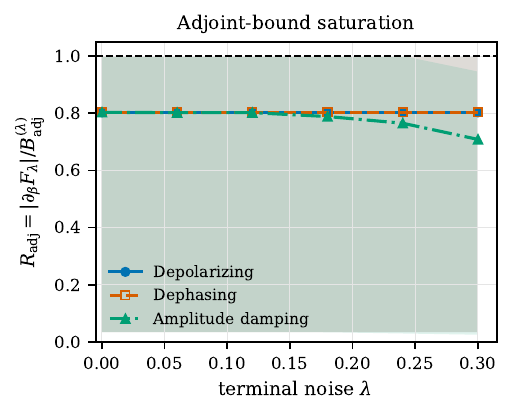}
    \caption{\label{fig:saturation-ratio-vs-noise}
    Saturation of the rigorous adjoint-channel bound under terminal noise.
    The ensemble is the same as in Fig.~\ref{fig:adjoint-bound-by-channel}; curves show medians and shaded regions show 10th--90th percentiles.
    The dashed line at one is the rigorous upper limit whenever the denominator is nonzero.
    For depolarizing and phase-flip noise, the gradient and weighted bound have the same channel dependence, so $R_{\rm adj}$ is independent of $\lambda$; the corresponding curves overlap.
    Amplitude damping changes the individual cost-difference weights nonuniformly, and the median saturation decreases at larger $\lambda$.}
\end{figure}

\begin{figure*}[t]
    \centering
    \includegraphics[width=\textwidth]{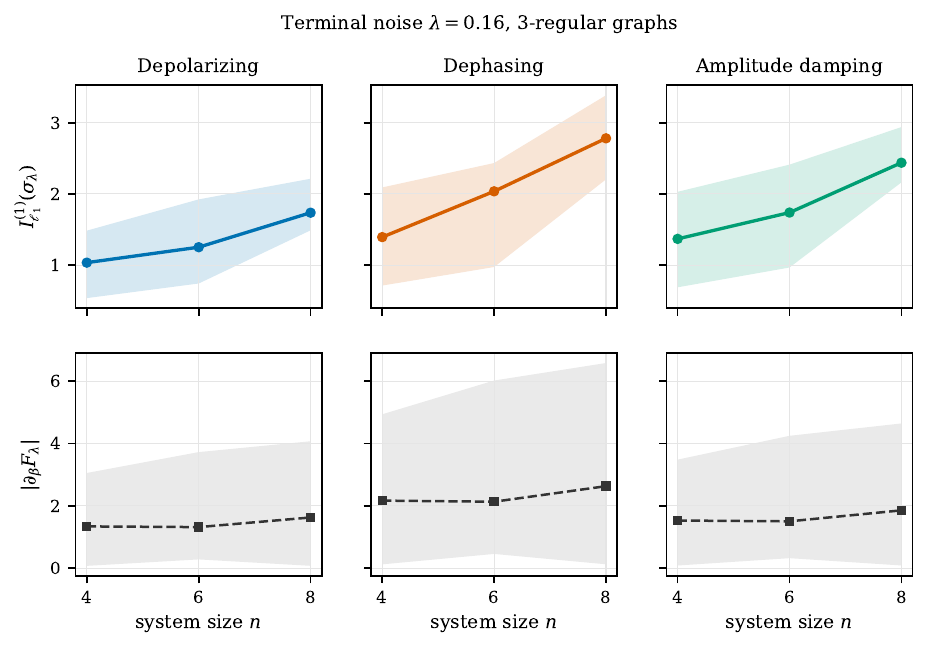}
    \caption{\label{fig:coherence-gradient-scaling}
    Finite-size tracking of Hamming-one imaginarity and mixer-gradient magnitude at fixed terminal noise strength.
    The top row shows the ensemble median of $I_{\ell_1}^{(1)}(\sigma_\lambda)$ for $3$-regular Max-Cut instances with $n\in\{4,6,8\}$ and $\lambda=0.16$.
    The bottom row shows the corresponding median of $|\partial_\beta F_\lambda|$.
    Shaded regions denote 10th--90th percentile bands over sampled graphs and parameters.
    This panel summarizes the finite samples; the adjoint-channel bound is tested in Fig.~\ref{fig:adjoint-bound-by-channel}, and no asymptotic conclusion is drawn from the small ensemble.}
\end{figure*}

\begin{figure*}[t]
    \centering
    \includegraphics[width=\textwidth]{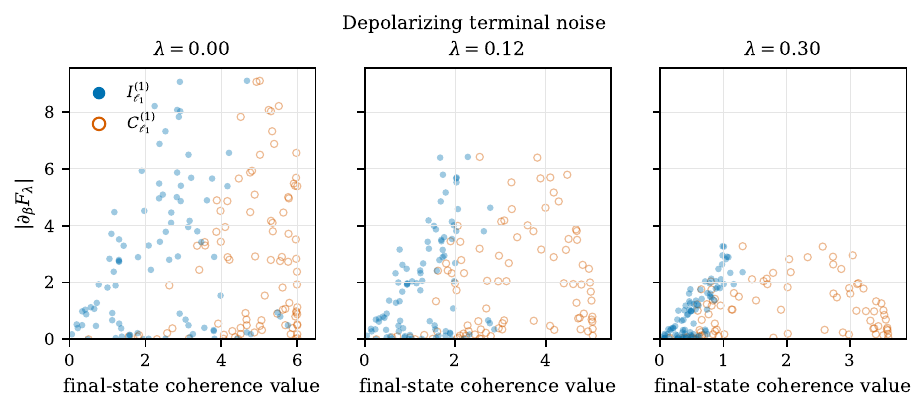}
    \caption{\label{fig:im-l1-depolarizing-comparison}
    Comparison of Hamming-one imaginarity and restricted $\ell_1$ coherence on the same terminal noisy states.
    The panels show depolarizing terminal noise at low, intermediate and high $\lambda$.
    Filled blue points use $I_{\ell_1}^{(1)}(\sigma_\lambda)$ on the horizontal axis.
    Open orange points use $C_{\ell_1}^{(1)}(\sigma_\lambda)$ for the same states and gradients.
    Since $I_{\ell_1}^{(1)}\le C_{\ell_1}^{(1)}$ pointwise, the $\ell_1$ points shift to the right for the same gradient values.
    Both horizontal quantities are computed from the final noisy state.
    The rigorous bound, which uses the pre-channel state and adjoint observable, is shown in Fig.~\ref{fig:adjoint-bound-by-channel}.}
\end{figure*}

\section{Discussion}
Changing the final mixer angle can guide QAOA training only if it changes the expected objective value.
Our result shows that this change can arise only from imaginary coherences between candidate solutions directly coupled by the mixer.
For the standard transverse-field mixer, these are pairs of bitstrings that differ by one bit.
Mixer-edge imaginarity is therefore necessary for a nonzero final mixer-angle gradient, but cancellations mean that it does not by itself ensure successful optimization.

Quantitatively, the gradient is limited by the mixer-edge imaginarity and the cost differences across those edges.
The restricted $\ell_1$ coherence gives a larger bound because it also counts real off-diagonal components.
Extending the argument to relative entropy, robustness, or other nonlinear coherence measures would require a separate proof.

The result is a one-sided limitation on one gradient signal in this QAOA setting.
A state may have nonzero Hamming-one imaginarity, while cancellations, graph structure, or the chosen parameters can still make the derivative small or zero.
Approximation guarantees and successful optimization require additional arguments.

The noise analysis here uses a terminal channel so that the noisy final state and the noisy gradient can be separated explicitly.
In this model, the gradient can be written with the pre-channel state and the cost Hamiltonian propagated through the adjoint channel.
When this adjoint-channel observable remains diagonal, the Hamming-distance-one calculation gives the weighted bound $B_{\rm adj}^{(\lambda)}(\rho_\beta)$ with channel-modified cost differences.
The final noisy state and the adjoint-channel observable can give different numerical quantities for the same derivative.
For phase-flip noise, this distinction is explicit: at $\lambda=1/2$, the channel completely dephases the final state and removes its imaginarity, while the diagonal objective and its gradient remain unchanged.

In the finite instances studied here, the gradient obeys the weighted adjoint-channel bound pointwise.
The final-state imaginarity alone is insufficient to determine the channel dependence of the gradient or to test the bound.
Restricted $\ell_1$ coherence is less specific because it counts real as well as imaginary off-diagonal components.
These plots are small exact-simulation checks of the analytic relations.
Studying asymptotic scaling would require a separate finite-size analysis.

Several extensions are still open.
The final mixer angle at higher depth follows from the same calculation.
Earlier mixer angles require the cost observable to be propagated through later layers, since QAOA then has several mixer angles and intermediate states.
Other mixers would replace the Hamming-distance-one restriction by the connectivity induced by the mixer.
For other diagonal objectives, the graph-degree factor would be replaced by the largest objective change across one allowed mixer transition.
These extensions would test how much of the present mixer-edge imaginarity statement extends beyond depth-one Max-Cut.
In the setting studied here, mixer connectivity provides a concrete selection rule for the imaginary coherences that can support the final mixer-gradient.

\appendix

\section{Proofs of the gradient bounds}

\subsection{Proof of Proposition~\ref{prop:mixer-edge-gradient}}

\begin{proof}
The product rule gives
\begin{equation}
\partial_\beta \rho_\beta
=
-iH_B\rho_\beta+i\rho_\beta H_B
=
-i[H_B,\rho_\beta].
\end{equation}
Therefore,
\begin{align}
\frac{\partial F}{\partial \beta}
&=
\Tr(H_C\,\partial_\beta \rho_\beta) \\
&=
-i\,\Tr\!\left(H_C[H_B,\rho_\beta]\right).
\end{align}
Cyclicity of the trace gives
\begin{equation}
\Tr\!\left(H_C[H_B,\rho_\beta]\right)
=
\Tr\!\left([H_C,H_B]\rho_\beta\right).
\end{equation}
The commutator has matrix elements
\begin{equation}
[H_C,H_B]_{\mathbf{z}\mathbf{z}'}
=
(C_{\mathbf{z}} - C_{\mathbf{z}'})(H_B)_{\mathbf{z}\mathbf{z}'}.
\end{equation}
Because $H_B$ is real and Hermitian, the commutator matrix is real and antisymmetric.
Combining the last two equations gives
\begin{equation}
\frac{\partial F}{\partial\beta}
=
-i
\sum_{\mathbf{z},\mathbf{z}'}
(C_{\mathbf{z}} - C_{\mathbf{z}'})
(H_B)_{\mathbf{z}\mathbf{z}'}
(\rho_\beta)_{\mathbf{z}'\mathbf{z}}.
\end{equation}
Pairing the ordered terms $(\mathbf{z},\mathbf{z}')$ and $(\mathbf{z}',\mathbf{z})$ gives
\begin{equation}
\frac{\partial F}{\partial\beta}
=
-\sum_{(\mathbf{z},\mathbf{z}')\in\mathcal{S}_B}
(C_{\mathbf{z}}-C_{\mathbf{z}'})
(H_B)_{\mathbf{z}\mathbf{z}'}
\operatorname{Im}(\rho_\beta)_{\mathbf{z}\mathbf{z}'}.
\end{equation}
Terms outside $\mathcal{S}_B$ vanish by definition, and the triangle inequality gives Eq.~\eqref{eq:mixer-gradient-weighted-bound}.
\end{proof}

\subsection{Proof of Corollary~\ref{cor:mixer-support-bound}}

\begin{proof}
Replacing the cost differences and mixer matrix elements in Eq.~\eqref{eq:mixer-gradient-weighted-bound} by their largest values on $\mathcal{S}_B$ gives the result.
\end{proof}

\subsection{Proof of Corollary~\ref{cor:max-cut-bound}}

\begin{proof}
For the transverse-field mixer $H_B=\sum_{k=1}^n X_k$, the matrix element $(H_B)_{\mathbf{z}\mathbf{z}'}$ equals $1$ exactly when $d_H(\mathbf{z},\mathbf{z}')=1$.
Proposition~\ref{prop:mixer-edge-gradient} therefore gives the bound with $\Delta_1(H_C)$, where $\Delta_1(H_C)$ is the largest one-bit objective-value change.
For unweighted Max-Cut, let $G=(V,E_G)$ be a graph with maximum degree $\Delta_{\max}$ and let $C_{\mathbf{z}}$ be the cut value of bitstring $\mathbf{z}$.
If $\mathbf{z}'=\mathbf{z}\oplus e_k$, only edges incident to vertex $k$ can change their cut status.
Each such edge changes the cut value by at most $1$, so $\Delta_1(H_C)\le\Delta_{\max}$.
Substituting this inequality into the transverse-field bound gives the result.
\end{proof}

\subsection{Proof of Corollary~\ref{cor:noisy-max-cut-bound}}

\begin{proof}
Phase-flip noise leaves $H_C$ unchanged, so $\Delta_1(H_C^{(\lambda)})\le\Delta_{\max}$ follows from Corollary~\ref{cor:max-cut-bound}.

For depolarizing noise, let $a=1-4\lambda/3$.
Its adjoint action gives
\begin{equation}
H_C^{(\lambda)}
=
a^2H_C+\frac{|E_G|}{2}(1-a^2)\mathbb{1}.
\end{equation}
The identity term does not affect cost differences, and $a^2\le1$ for $0\le\lambda\le1$.
Hence
$\Delta_1(H_C^{(\lambda)})=a^2\Delta_1(H_C)\le\Delta_{\max}$.

For amplitude damping, write $s_i=(-1)^{z_i}$ and
$q_i=\lambda+(1-\lambda)s_i$.
The eigenvalue of the adjoint-propagated contribution from an edge $(i,j)$ is
$(1-q_iq_j)/2$.
Flipping bit $k$ changes the contribution of an incident edge $(k,j)$ by an amount whose absolute value is
$(1-\lambda)|q_j|\le1-\lambda$.
Only the edges incident on $k$ can change, and therefore
\begin{equation}
\Delta_1(H_C^{(\lambda)})
\le
(1-\lambda)\Delta_{\max}
\le
\Delta_{\max}.
\end{equation}
The terminal-noise Hamming-one bound then gives the stated gradient inequality for all three channels.
\end{proof}

\subsection{Proof of Proposition~\ref{prop:terminal-adj-bound}}

\begin{proof}
By the definition of the adjoint channel,
\begin{equation}
F_\lambda(\beta)
=
\Tr\!\left(H_C\mathcal{E}_\lambda(\rho_\beta)\right)
=
\Tr\!\left(\mathcal{E}_\lambda^\dagger(H_C)\rho_\beta\right).
\end{equation}
Thus $F_\lambda(\beta)=\Tr(H_C^{(\lambda)}\rho_\beta)$.
The terminal channel is independent of $\beta$, so the derivative acts only on the pre-channel state $\rho_\beta$.
For the final mixer angle, $\partial_\beta\rho_\beta=-i[H_B,\rho_\beta]$.
Therefore,
\begin{equation}
    \partial_\beta F_\lambda
    =
    -i\,\Tr\!\left(H_C^{(\lambda)}[H_B,\rho_\beta]\right)
    =
    -i\,\Tr\!\left([H_C^{(\lambda)},H_B]\rho_\beta\right).
\end{equation}
If $H_C^{(\lambda)}$ is diagonal in the computational basis, Proposition~\ref{prop:mixer-edge-gradient} applies with $H_C$ replaced by $H_C^{(\lambda)}$.
Taking the triangle inequality gives $B_{\rm adj}^{(\lambda)}(\rho_\beta)$, and replacing its weights by their maxima gives the final inequality in Proposition~\ref{prop:terminal-adj-bound}.
\end{proof}

\begin{acknowledgments}
K.B. and S.M.A.H. acknowledge funding from the European Union's Horizon Europe Framework Programme under the ERA Chair scheme, grant agreement No.~101087126 (QUEST).
S.K.\ and K.J.\ acknowledge support from the Ministry of Science, Research and Culture of the State of Brandenburg within the Centre for Quantum Technologies and Applications (CQTA).
\end{acknowledgments}

\section*{Author Contributions}

K.B.: Conceptualization, Methodology, Software, Formal analysis, Investigation, Visualization, Writing--original draft, Writing--review \& editing, Supervision, and Project administration.
S.M.A.H.: Formal analysis, Investigation, Validation, Visualization, Writing--original draft, and Writing--review \& editing.
S.K.: Methodology, Software, Formal analysis, Validation, Visualization, Supervision, and Writing--review \& editing.
N.K.: Conceptualization, Methodology, Formal analysis, and Writing--review \& editing.
K.J.: Scientific discussion and advice, funding acquisition.

\bibliography{references}

@article{Preskill_2018,
   title={Quantum Computing in the NISQ era and beyond},
   volume={2},
   ISSN={2521-327X},
   url={http://dx.doi.org/10.22331/q-2018-08-06-79},
   DOI={10.22331/q-2018-08-06-79},
   journal={Quantum},
   publisher={Verein zur Forderung des Open Access Publizierens in den Quantenwissenschaften},
   author={Preskill, John},
   year={2018},
   month=aug, pages={79} }

@misc{farhi2014quantumapproximateoptimizationalgorithm,
      title={A Quantum Approximate Optimization Algorithm}, 
      author={Edward Farhi and Jeffrey Goldstone and Sam Gutmann},
      year={2014},
      eprint={1411.4028},
      archivePrefix={arXiv},
      primaryClass={quant-ph},
      url={https://arxiv.org/abs/1411.4028}, 
}

@article{Hickey_2018,
   title={Quantifying the imaginarity of quantum mechanics},
   volume={51},
   ISSN={1751-8121},
   url={http://dx.doi.org/10.1088/1751-8121/aabe9c},
   DOI={10.1088/1751-8121/aabe9c},
   number={41},
   journal={Journal of Physics A: Mathematical and Theoretical},
   publisher={IOP Publishing},
   author={Hickey, Alexander and Gour, Gilad},
   year={2018},
   month=sep, pages={414009} }

@article{Chen_2023,
   title={Measures of imaginarity and quantum state order},
   volume={66},
   pages={280312},
   ISSN={1869-1927},
   url={http://dx.doi.org/10.1007/s11433-023-2126-9},
   DOI={10.1007/s11433-023-2126-9},
   number={8},
   journal={Science China Physics, Mechanics \& Astronomy},
   publisher={Springer Science and Business Media LLC},
   author={Chen, Qiang and Gao, Ting and Yan, Fengli},
   year={2023},
   month=jul }

@article{Wu_2021,
   title={Operational Resource Theory of Imaginarity},
   volume={126},
   pages={090401},
   ISSN={1079-7114},
   url={http://dx.doi.org/10.1103/PhysRevLett.126.090401},
   DOI={10.1103/physrevlett.126.090401},
   number={9},
   journal={Phys. Rev. Lett.},
   publisher={American Physical Society (APS)},
   author={Wu, Kang-Da and Kondra, Tulja Varun and Rana, Swapan and Scandolo, Carlo Maria and Xiang, Guo-Yong and Li, Chuan-Feng and Guo, Guang-Can and Streltsov, Alexander},
   year={2021},
   month=Mar }

@article{Wu_2021a,
   title={Resource theory of imaginarity: Quantification and state conversion},
   volume={103},
   pages = {032401},
   ISSN={2469-9934},
   url={http://dx.doi.org/10.1103/PhysRevA.103.032401},
   DOI={10.1103/physreva.103.032401},
   number={3},
   journal={Phys. Rev. A},
   publisher={American Physical Society (APS)},
   author={Wu, Kang-Da and Kondra, Tulja Varun and Rana, Swapan and Scandolo, Carlo Maria and Xiang, Guo-Yong and Li, Chuan-Feng and Guo, Guang-Can and Streltsov, Alexander},
   year={2021},
   month=Mar }

@article{Zhou_2020,
   title={Quantum Approximate Optimization Algorithm: Performance, Mechanism, and Implementation on Near-Term Devices},
   volume={10},
   pages={021067},
   ISSN={2160-3308},
   url={http://dx.doi.org/10.1103/PhysRevX.10.021067},
   DOI={10.1103/physrevx.10.021067},
   number={2},
   journal={Physical Review X},
   publisher={American Physical Society (APS)},
   author={Zhou, Leo and Wang, Sheng-Tao and Choi, Soonwon and Pichler, Hannes and Lukin, Mikhail D.},
   year={2020},
}

@misc{chang2023quantumspeedupmaximumcut,
      title={Quantum Speedup for the Maximum Cut Problem}, 
      author={Weng-Long Chang and Renata Wong and Wen-Yu Chung and Yu-Hao Chen and Ju-Chin Chen and Athanasios V. Vasilakos},
      year={2023},
      eprint={2305.16644},
      archivePrefix={arXiv},
      primaryClass={quant-ph},
      url={https://arxiv.org/abs/2305.16644}, 
}

@article{McClean_2018,
   title={Barren plateaus in quantum neural network training landscapes},
   volume={9},
   ISSN={2041-1723},
   url={http://dx.doi.org/10.1038/s41467-018-07090-4},
   DOI={10.1038/s41467-018-07090-4},
   number={1},
   pages={4812},
   journal={Nature Communications},
   publisher={Springer Science and Business Media LLC},
   author={McClean, Jarrod R. and Boixo, Sergio and Smelyanskiy, Vadim N. and Babbush, Ryan and Neven, Hartmut},
   year={2018},
   month=nov }

@article{Cerezo_2021,
   title={Cost function dependent barren plateaus in shallow parametrized quantum circuits},
   volume={12},
   pages={1791},
   ISSN={2041-1723},
   url={http://dx.doi.org/10.1038/s41467-021-21728-w},
   DOI={10.1038/s41467-021-21728-w},
   number={1},
   journal={Nature Communications},
   publisher={Springer Science and Business Media LLC},
   author={Cerezo, M. and Sone, Akira and Volkoff, Tyler and Cincio, Lukasz and Coles, Patrick J.},
   year={2021},
   month=mar }

@article{Arrasmith_2021,
   title={Effect of barren plateaus on gradient-free optimization},
   volume={5},
   ISSN={2521-327X},
   url={http://dx.doi.org/10.22331/q-2021-10-05-558},
   DOI={10.22331/q-2021-10-05-558},
   journal={Quantum},
   publisher={Verein zur Forderung des Open Access Publizierens in den Quantenwissenschaften},
   author={Arrasmith, Andrew and Cerezo, M. and Czarnik, Piotr and Cincio, Lukasz and Coles, Patrick J.},
   year={2021},
   month=oct, pages={558} }

@article{Blekos_2024,
   title={A review on Quantum Approximate Optimization Algorithm and its variants},
   volume={1068},
   ISSN={0370-1573},
   url={http://dx.doi.org/10.1016/j.physrep.2024.03.002},
   DOI={10.1016/j.physrep.2024.03.002},
   journal={Physics Reports},
   publisher={Elsevier BV},
   author={Blekos, Kostas and Brand, Dean and Ceschini, Andrea and Chou, Chiao-Hui and Li, Rui-Hao and Pandya, Komal and Summer, Alessandro},
   year={2024},
   month=jun, pages={1–66} }

@article{Abbas2024,
	title = {Challenges and opportunities in quantum optimization},
	volume = {6},
	issn = {2522-5820},
	url = {https://www.nature.com/articles/s42254-024-00770-9},
	doi = {10.1038/s42254-024-00770-9},
	number = {12},
	urldate = {2026-04-12},
	journal = {Nature Reviews Physics},
	author = {Abbas, Amira and Ambainis, Andris and Augustino, Brandon and Bärtschi, Andreas and Buhrman, Harry and Coffrin, Carleton and Cortiana, Giorgio and Dunjko, Vedran and Egger, Daniel J. and Elmegreen, Bruce G. and Franco, Nicola and Fratini, Filippo and Fuller, Bryce and Gacon, Julien and Gonciulea, Constantin and Gribling, Sander and Gupta, Swati and Hadfield, Stuart and Heese, Raoul and Kircher, Gerhard and Kleinert, Thomas and Koch, Thorsten and Korpas, Georgios and Lenk, Steve and Marecek, Jakub and Markov, Vanio and Mazzola, Guglielmo and Mensa, Stefano and Mohseni, Naeimeh and Nannicini, Giacomo and O’Meara, Corey and Tapia, Elena Peña and Pokutta, Sebastian and Proissl, Manuel and Rebentrost, Patrick and Sahin, Emre and Symons, Benjamin C. B. and Tornow, Sabine and Valls, Víctor and Woerner, Stefan and Wolf-Bauwens, Mira L. and Yard, Jon and Yarkoni, Sheir and Zechiel, Dirk and Zhuk, Sergiy and Zoufal, Christa},
	month = oct,
	year = {2024},
	pages = {718--735},
}

@article{Larocca2025,
  author  = {Mart{\'\i}n Larocca and Supanut Thanasilp and Samson Wang and Kunal Sharma and Jacob Biamonte and Patrick J. Coles and Lukasz Cincio and Jarrod R. McClean and Zo{\"e} Holmes and M. Cerezo},
  title   = {Barren plateaus in variational quantum computing},
  journal = {Nature Reviews Physics},
  year    = {2025},
  volume  = {7},
  number  = {4},
  pages   = {174--189},
  doi     = {10.1038/s42254-025-00813-9},
  url     = {https://doi.org/10.1038/s42254-025-00813-9},
  issn    = {2522-5820},
  month   = apr
}

@article{PhysRevLett.129.120501,
  title = {Coherence as a Resource for {Shor's} Algorithm},
  author = {Ahnefeld, Felix and Theurer, Thomas and Egloff, Dario and Matera, Juan Mauricio and Plenio, Martin B.},
  journal = {Phys. Rev. Lett.},
  volume = {129},
  issue = {12},
  pages = {120501},
  numpages = {7},
  year = {2022},
  month = {Sep},
  publisher = {American Physical Society},
  doi = {10.1103/PhysRevLett.129.120501},
  url = {https://link.aps.org/doi/10.1103/PhysRevLett.129.120501}
}

@article{PhysRevA.93.012111,
  title = {Coherence as a resource in decision problems: The {Deutsch--Jozsa} algorithm and a variation},
  author = {Hillery, Mark},
  journal = {Phys. Rev. A},
  volume = {93},
  issue = {1},
  pages = {012111},
  numpages = {6},
  year = {2016},
  month = {Jan},
  publisher = {American Physical Society},
  doi = {10.1103/PhysRevA.93.012111},
  url = {https://link.aps.org/doi/10.1103/PhysRevA.93.012111}
}

@article{PhysRevA.106.062429,
  title = {Entanglement and coherence in the {Bernstein--Vazirani} algorithm},
  author = {Naseri, Moein and Kondra, Tulja Varun and Goswami, Suchetana and Fellous-Asiani, Marco and Streltsov, Alexander},
  journal = {Phys. Rev. A},
  volume = {106},
  issue = {6},
  pages = {062429},
  numpages = {13},
  year = {2022},
  month = {Dec},
  publisher = {American Physical Society},
  doi = {10.1103/PhysRevA.106.062429},
  url = {https://link.aps.org/doi/10.1103/PhysRevA.106.062429}
}

@article{e21030260,
  author = {Liu, Ye-Chao and Shang, Jiangwei and Zhang, Xiangdong},
  title = {Coherence Depletion in Quantum Algorithms},
  journal = {Entropy},
  volume = {21},
  number = {3},
  pages = {260},
  year = {2019},
  doi = {10.3390/e21030260},
  url = {https://www.mdpi.com/1099-4300/21/3/260}
}

@article{FENG2023129048,
  title = {Coherence and entanglement in {Grover} and {Harrow--Hassidim--Lloyd} algorithm},
  author = {Feng, Changchun and Chen, Lin and Zhao, Li-Jun},
  journal = {Physica A: Statistical Mechanics and its Applications},
  volume = {626},
  pages = {129048},
  year = {2023},
  doi = {10.1016/j.physa.2023.129048},
  url = {https://www.sciencedirect.com/science/article/pii/S0378437123006039}
}

@article{haug2025pseudorandom,
  title = {Pseudorandom unitaries are neither real nor sparse nor noise-robust},
  author = {Haug, Tobias and Bharti, Kishor and Koh, Dax Enshan},
  journal = {Quantum},
  volume = {9},
  pages = {1759},
  year = {2025}
}

@article{miyazaki2022imaginarity,
  title = {Imaginarity-free quantum multiparameter estimation},
  author = {Miyazaki, Jisho and Matsumoto, Keiji},
  journal = {Quantum},
  volume = {6},
  pages = {665},
  year = {2022}
}

@article{ye2026coherence,
  title = {Coherence and imaginarity as resources in quantum circuit complexity},
  author = {Ye, Linlin and Wu, Zhaoqi and Zhou, Nanrun},
  journal = {Advanced Quantum Technologies},
  volume = {9},
  number = {4},
  pages = {e01007},
  year = {2026}
}

@article{s7kr-8rrn,
  title = {Quantum-imaginarity-based quantum speed limit},
  author = {Xuan, Dong-Ping and Shen, Zhong-Xi and Zhou, Wen and Fei, Shao-Ming and Wang, Zhi-Xi},
  journal = {Phys. Rev. A},
  volume = {112},
  issue = {5},
  pages = {052202},
  numpages = {19},
  year = {2025},
  month = {Nov},
  publisher = {American Physical Society},
  doi = {10.1103/s7kr-8rrn},
  url = {https://link.aps.org/doi/10.1103/s7kr-8rrn}
}

@article{Caliz2025,
  title = {A coherent approach to quantum-classical optimization},
  author = {C{\'a}liz, Andr{\'e}s N. and Riu, Jordi and Bosch, Josep and Torrente, Pau and Miralles, Jose and Riera, Arnau},
  journal = {Communications Physics},
  volume = {8},
  pages = {197},
  year = {2025},
  doi = {10.1038/s42005-025-02111-3}
}

@article{Sarmina2026,
  title = {Probing entanglement and parameter sensitivity in {QAOA} via Quantum Fisher Information},
  author = {Garc{\'i}a Sarmina, Brian and Saavedra Benavides, Jorge and Sun, Guo-Hua and Dong, Shi-Hai},
  journal = {Quantum Review Letters},
  volume = {2},
  pages = {1--20},
  year = {2026},
  doi = {10.1016/j.qrl.2025.12.001},
  url = {https://www.sciencedirect.com/science/article/pii/S3050491025000068}
}

\end{document}